\newtheorem{definition}{Definition}
\newtheorem{theorem}{Theorem}
\newtheorem{lemma}[theorem]{Lemma}
\newtheorem{problem}{Problem}
\definecolor{gray160}{RGB}{160,160,160} 
\selectfont\textcolor{gray160}{
      This article has been published in the Proceedings of XXIV GEOINFO, after peer review. \\
      DBLP Entry: https://dblp.org/rec/conf/geoinfo/MinervinoCOS23.html \\
    Conference Repository: http://urlib.net/ibi/8JMKD3MGPDW34P/4ADBK2H 
    }
\title{QQESPM: A Quantitative and Qualitative Spatial Pattern Matching Algorithm}
\author{Carlos V. A. Minervino\inst{1}, Cláudio E. C. Campelo\inst{1},\\ Maxwell Guimarães de Oliveira\inst{1}, Salatiel D. Silva\inst{1} }
\address{Systems and Computing Department (DSC) \\ Federal University of Campina Grande (UFCG) \\ Av. Aprígio Veloso 882, Bloco CN, Bairro Universitário -- 58.429-140 \\ Campina Grande -- PB -- Brazil}
\begin{document} 

\maketitle


\thispagestyle{firstpage}

\begin{abstract}
   The Spatial Pattern Matching (SPM) query allows for the retrieval of Points of Interest (POIs) based on spatial patterns defined by keywords and distance criteria. However, it does not consider the connectivity between POIs. In this study, we introduce the Qualitative and Quantitative Spatial Pattern Matching (QQ-SPM) query, an extension of the SPM query that incorporates qualitative connectivity constraints. To answer the proposed query type, we propose the QQESPM algorithm, which adapts the state-of-the-art ESPM algorithm to handle connectivity constraints. Performance tests comparing QQESPM to a baseline approach demonstrate QQESPM's superiority in addressing the proposed query type.

\end{abstract}

\section{Introduction}
\label{sec:introduction}
\protect

The rise of Location-Based Services (LBS) such as Google Maps\footnote{https://www.google.com/maps} and Foursquare\footnote{https://foursquare.com/} has underscored the necessity for efficient Points of Interest (POIs) search algorithms. The continuous expansion of geotextual data within these systems outlines the importance of effective algorithms and mechanisms for efficient POI querying based on attributes such as keywords, proximity, and other factors.

Spatial Pattern Matching (SPM), a category of POI group search, is designed to identify all combinations of POIs that conform to a user-defined spatial pattern established by keywords and distance criteria \cite{msj2018, fang2019evaluating, fang2018spacekey, li2019spatial, espm2019}. To illustrate, consider a scenario where a user seeks an apartment near a school and a hospital, while maintaining a certain distance from the hospital for hygiene reasons. The user's criteria stipulate that the apartment should be situated between 200m and 1km away from a hospital and at most 2km away from a school. Such requirements can be addressed through an SPM search by using the query pattern depicted in Figure \ref{fig:examplespm} (A).

While the SPM search methodology proves highly effective in scenarios necessitating distance constraints among queried POIs, it lacks the capability to address qualitative connectivity constraints between these entities. For instance, it cannot handle queries such as finding a school adjacent to a wooded area. To illustrate a more intricate search scenario, consider an individual seeking a rental space within a commercial building for their small business. In addition to this, they require the building to have an onsite gym and an adjacent green area. Furthermore, they need the building to be located within 1km from an elementary school for the convenience of the enrollment and pickup of their child. This complex scenario can be modeled using a spatial pattern graph that incorporates both quantitative (distance) and qualitative (connectivity) constraints, as shown in Figure \ref{fig:examplespm} (B). However, existing SPM search algorithms are unable to handle such queries, requiring users to perform distance-based queries and manually sift through the results to find those meeting qualitative constraints.
\protect
\protect
\begin{figure}[ht]
\begin{center}
\includegraphics[width=.9\textwidth]{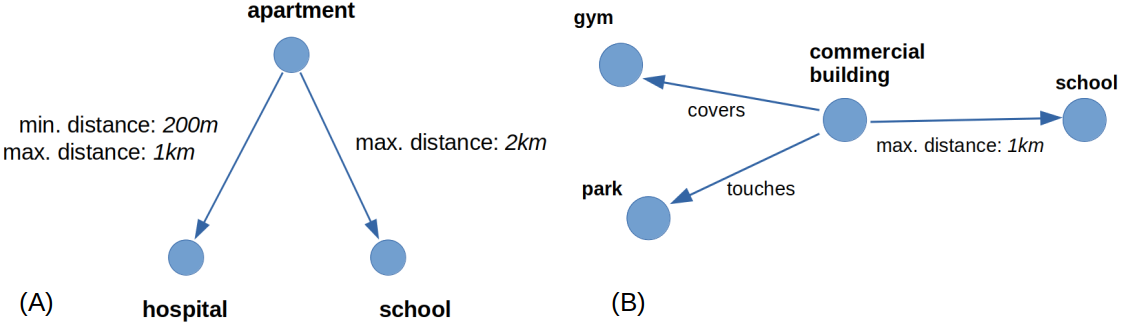}
\caption{Example of a distance-based spatial pattern (A) and a qualitative and quantitative spatial pattern (B)}
\label{fig:examplespm}
\end{center}
\end{figure}
\protect
Considering this challenge, this paper introduces a new type of POI group search: Qualitative and Quantitative Spatial Pattern Matching (QQ-SPM). The QQ-SPM query extends the conventional SPM query to encompass qualitative connectivity constraints between queried POIs. This approach enables the incorporation of qualitative requirements expressed through topological connectivity relations among the POI geometries. QQ-SPM thus provides a comprehensive solution that covers the entire spectrum of SPM search patterns while accommodating qualitative criteria specified by the user, enhancing the versatility of spatial pattern specification.
\protect
This work has the following key contributions:
\protect\begin{itemize}\protect
\protect
\item\protect A Formal Definition of the QQ-SPM query, where the central parameter is a spatial pattern represented as a graph. This pattern defines the target POI keywords, desired distances, and connectivity relations.
\item The QQESPM algorithm, designed to address QQ-SPM queries. This algorithm is adapted from the Efficient Spatial Pattern Matching (ESPM) algorithm presented in \cite{espm2019}, specifically refined to accommodate connectivity requirements.
\item An open-source code implementation for the proposed QQESPM algorithm.
\item An Empirical Evaluation with comparative analyses, to assess the efficiency and scalability of the QQESPM algorithm. This evaluation compares the performance of QQESPM with that of a basic solution that employs qualitative constraint verification only during the final stage of a traditional SPM query.
\end{itemize}
\protect
The rest of the paper is organized as follows. Section 2 summarizes related work. Section 3 brings a review of the  indexing and the topological relation model used in the QQESPM algorithm. Section 4 brings the formal definition of the concepts that permeate the QQ-SPM query problem. Section 5 describes the proposed QQESPM algorithm. Section 6 outlines the performance experiments comparing the proposed QQESPM algorithm with a trivial solution. Finally, Section 7 concludes the paper by summarizing the main achievements.
\protect
\section{Related Work} \label{sec:relatedwork}
\protect
In this section we mention three of the main types of spatial keyword queries related to this work. The first type involves searching for POIs that meet specific keywords and are in close proximity to a designated center point for the search. For instance, the top-k spatial keyword search aims to identify geotextual objects (e.g., POIs) using a set of keywords and an initial search location. The goal is to locate the top-k closest POIs to the starting point while satisfying all search keywords. Studies in this field include those by \cite{cao2011collective, hermoso2019re, zhang2013scalable}.
\protect
The second type of search focuses on minimizing distances between queried POIs. For instance, m-Closest spatial keyword search seeks groups of closely located POIs that collectively satisfy a user-defined set of $m$ keywords. Studies in this category include those conducted by \cite{choi2016finding, choi2020spatial, guo2015efficient}. However, the first two search types lack the capability to accommodate more intricate patterns, such as specifying a minimum distance between two returned POIs, which is essential when users want to avoid close proximity to certain types of POIs, like hospitals.
\protect
The third search type is the SPM search, which utilizes a graph-based spatial pattern. In this pattern, vertices store spatial keywords, and edges represent desired distance constraints. SPM search offers increased specificity by enabling users to impose both minimum and maximum distance restrictions between pairs of POIs. Studies in this category include works by \cite{msj2018, li2019spatial, fang2018spacekey, fang2019evaluating, espm2019, chen2022example}. However, the SPM search lacks the capability to model qualitative restrictions, such as connectivity limitations.
\protect
In \cite{long2016indexing}, an efficient mechanism for indexing qualitative relations is proposed, aiming to reduce the time required for calculating the qualitative relation between two geometries. The core concept involves initially computing the qualitative relation between the Minimal Bounding Rectangles (MBRs) of the spatial objects. In cases where it is not possible to determine the topological relation between the geometries of the POIs solely based on the topological relation between their MBRs, their topological relation will be previously indexed. However, this approach primarily focuses on efficiently determining the qualitative relation between two existing geospatial objects within a dataset, rather than identifying the subset of objects satisfying a specific qualitative relation among numerous objects.
\protect
The concept of a spatial pattern defined by qualitative connectivity constraints is introduced in \cite{gabriel2021master}. The work presents the Qualitative Spatial Pattern Search (QSPM) and an algorithm called the Topo-MSJ algorithm for addressing this type of search. However, being an early work in the realm of qualitative patterns, the author does not explore the potential of combining quantitative restrictions with qualitative ones in this search context.
\protect
\section{Background}
\protect
In this Section we give a brief review of the core concepts used in the QQESPM algorithm, including the geo-textual indexing and the topological relation model.
\protect
\subsection{IL-Quadtree}
\label{subsec:ilquadtree}
\protect
The Inverted Linear Quadtree (IL-Quadtree), a geotextual indexing structure introduced in \cite{ilquadtree2016}, serves as a fundamental component in the QQESPM algorithm. This index functions as a map, associating each unique spatial keyword in the dataset with its respective quadtree index structure. Each of these quadtrees contains a set of spatial objects (e.g., POIs) related to the specific keyword under consideration.
\protect
The bidimensional quadtree, as proposed in \cite{finkel1974quad}, divides a two-dimensional spatial domain into four quadrants recursively. Each quadrant can be further subdivided into four subquadrants, and this subdivision is represented by a tree structure. Each rectangular subspace represents a node in the tree, and a node's children correspond to its subquadrants. Subdivision occurs when the number of spatial entities (POIs) in a node exceeds a specified threshold, which can be adjusted during quadtree construction. Subspace division employs directional codes (00, 01, 10, and 11) to signify southwest, southeast, northwest, and northeast quadrants, respectively. Concatenating these codes recursively provides a unique identifier for each node, indicating its position in the quadtree hierarchy. Figure \ref{fig:quadtree} illustrates the spatial partitioning in the quadtree and its associated tree structure. The IL-Quadtree's architecture efficiently retrieves geotextual objects during geotextual queries, as indicated in \cite{ilquadtree2016} and \cite{espm2019}. The QQESPM algorithm relies on the IL-Quadtree indexing method to perform queries. 
\protect
\begin{figure}[ht]
\centering
\includegraphics[width=.8\textwidth]{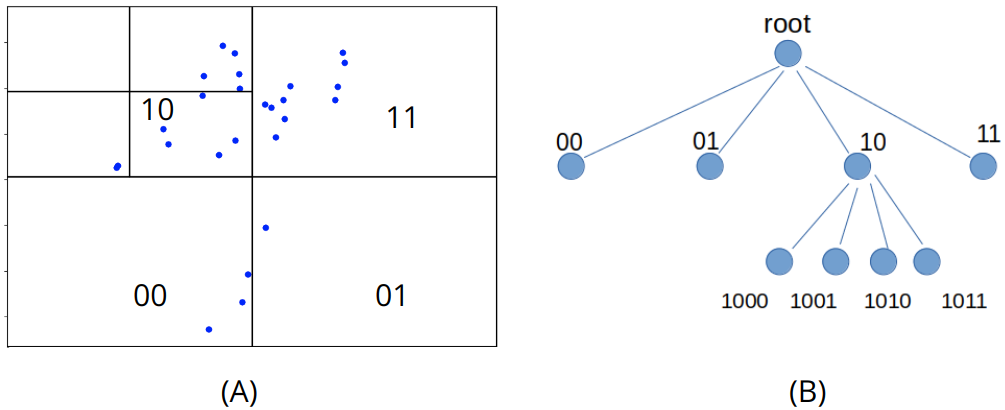}
\caption{Example of a quadtree space subdivision (A), and its associated tree structure (B)}
\label{fig:quadtree}
\end{figure}
\protect
\subsection{DE-9IM}
\label{subsec:de9im}
\protect
A foundational model for computing the topological connectivity relation between two-dimensional geometries is the Dimensionally Extended Nine-Intersection Model (DE-9IM) \cite{egenhofer1990categorizing, clementini1993small, clementini1994modelling}. This representation provides a structured framework for formally defining spatial predicates that describe the connectivity between POIs. DE-9IM can represent topological relations such as ``\textit{equals}", ``\textit{touches}" and  ``\textit{contains}".
\protect
This topological relation model utilizes a 3x3 matrix to represent the topological relation between two distinct geometries, denoted as A and B. The matrix elements represent intersections across the interior, boundary, and exterior components of these geometries. Each matrix configuration corresponds to a possible topological relation. A simple description for some topological relations from \cite{strobl2008dimensionally} can be found in Table \ref{tab:topologicalpredicates}.
\protect
\begin{table}[h]
\caption{Topological Predicates} 
\protect
\label{tab:topologicalpredicates}
 \small
  \begin{center}
    \begin{tabular}{|c||p{11cm}|} \hline
      {\bf Topological Predicate}   &{\bf Meaning} \\\hline
      {Equals}   & The Geometries are topologically equal \\\hline
      {Disjoint}   & The Geometries have no point in common \\\hline
      {Intersects}   & The Geometries have at least one point in common (the inverse of Disjoint) \\\hline
      {Touches}   & The Geometries have at least one boundary point in common, but no interior points\\\hline
      {Partially Overlaps}   & The Geometries share some but not all points in common, and the intersection has the same dimension as the Geometries themselves\\\hline
      {Within}   & Geometry A lies in the interior of Geometry B\\\hline
      {Contains}  & Geometry B lies in the interior of Geometry A (the inverse of Within)\\\hline
    \end{tabular}
  \end{center}
\end{table}
\protect
The proposed QQESPM algorithm uses the topological relations ``\textit{equals}", ``\textit{touches}", ``\textit{covers}",  ``\textit{covered by}", ``\textit{partially overlaps}" and ``\textit{disjoint}". The relation ``\textit{covers}" is a variation of ``\textit{contains}" allowing the geometries to have intersecting boundaries \cite{clementini1994modelling}, and the relation ``\textit{covered by}" is simply the inverse of ``\textit{covers}".
\protect
\section{Problem Formalization}
\protect
Within this section, we give a formal definition for the fundamental terms in the QQ-SPM search problem.
~\newline
\begin{definition}[spatial pattern]
A Spatial Pattern is a graph $G(V, E)$ with a set of $n$ vertices $V = {v_1, … v_n}$ and a set of $m$ edges $E$, satisfying the following constraints:
\protect
\begin{enumerate}[(a)]
\item each vertex $v_i\in V$ has an associated spatial keyword $w_i$
\item each edge $e(v_i, v_j)\in E$ is labelled with at least one of the following types of description:
    \begin{itemize}
    \item[\textbf{-}] a conectivity spatial predicate $\Re_{ij}$, among the following: $\{``\textit{equals}", ``\textit{touches}", ``\textit{covers}", ``\textit{covered\ by}", ``\textit{partially\ overlaps}",\newline ``\textit{disjoint}"\}$ 
    \item[\textbf{-}] a distance interval $[l_{ij}, u_{ij}]$, and a sign $\tau\in\{ ``\leftarrow ", ``\rightarrow ", ``\leftrightarrow ", ``- " \}$\ 
    \end{itemize}
\end{enumerate}
\end{definition}
\protect
Each possible spatial pattern graph specifies a QQ-SPM query. The vertices specify the POIs desired keywords. The connectivity predicate indicates the desired connectivity relation between the searched POIs. The distance between the searched POIs is restricted by the lower ($l_{ij}$) and upper ($u_{ij}$) bounds associated with the edge. The meanings of the possible signs for an edge are described below:
\begin{itemize}
\item $v_i\rightarrow v_j$ [$v_i$ excludes $v_j$ ]: No POI with keyword $w_j$ in the dataset should be found within a distance less than $l_{ij}$ from the POI corresponding to $v_i$.
\item $v_i\leftarrow v_j$ [$v_j$ excludes $v_i$]: No POI with keyword $w_i$ in the dataset should be found within a distance less than $l_{ij}$ from the POI corresponding to $v_j$.
\item $v_i\leftrightarrow v_j$ [mutual exclusion]: The two-way restriction, i.e., $v_i$ excludes $v_j$ and $v_j$ excludes $v_i$.
\item $v_i\ –\ v_j$ [mutual inclusion]: The occurrence of POIs with keywords $w_i$ and $w_j$ in the dataset with distance shorter than $l_{ij}$ from POIs corresponding to $v_i, v_j$ is allowed.
\end{itemize}
\protect
Edges with the distance interval information are called quantitative edges, and edges with the connectivity predicate are called qualitative edges. Edges may or may not be simultaneously quantitave and qualitative. If a quantitative edge has the mutual inclusion sign, it is called an inclusive edge, otherwise, it is called an exclusive edge. Also note that, since the relation ``\textit{covered by}" is the inverse of ``\textit{covers}", it could be discarded, but once edges are directional, i.e., have specific starting and ending vertices, we keep the relation ``\textit{covered by}".
\protect
Notice that the attributes of an edge for the QQ-SPM query is a generalization of the attributes of an edge for the SPM query allowing qualitative connectivity constraints. In this way, the spatial pattern definition for the QQ-SPM query is also a generalization of the spatial pattern definition for the SPM query.
~\newline
\begin{definition}[qq-e-match] A pair of POIs $(p_i, p_j)$ from a dataset $D$ is called a qq-e-match for the edge $e(v_i, v_j)$ if they respectively have the keywords $w_i, w_j$ from the vertices $v_i, v_j$, and satisfy the distance and connectivity constraints of the edge $e$. 
\end{definition}
~\newline
\begin{definition}[match]
A tuple of $n$ POIs $S = (p_1, p_2, …, p_n)$ from a dataset $D$ is called a match for a spatial pattern $G(V, E)$ when $|V|=n$ and for each $1\le i\le n$, $p_i$ has the keyword $w_i$ from the vertex $v_i$, and for each edge $e(v_i, v_j)$ of $G$, the POIs pair $(p_i, p_j)$ is a qq-e-match for the edge $e$. 
\end{definition}
Note that the order of POIs in the tuple corresponds to the order of vertices in the pattern $G$, so the $i$th POI $p_i$ in the tuple corresponds to the $i$th vertex ($v_i$) in the pattern $G$.
~\newline
\begin{problem}[QQ-SPM query]
The QQ-SPM search problem or QQ-SPM query consists of finding all the matches of a spatial pattern $G(V, E)$ in a dataset $D$ of POIs, i.e., finding all combinations of POIs from $D$ that match the given spatial pattern.
\end{problem}
\protect
In order to calculate the qq-e-matches efficiently, the QQESPM algorithm uses the qq-n-match concept, formally defined below.
~\newline
\begin{definition}[qq-n-match]
\label{def:nmatch}
Let $e(v_i, v_j)$ be an edge of a spatial pattern $G(V, E)$, let $ILQ_i$ and $ILQ_j$ be the quadtrees for the keywords $w_i$ and $w_j$ of the vertices $v_i$ and $v_j$ respectively, and let $n_i, n_j$ be two nodes from $ILQ_i$ and $ILQ_j$, and $b_i, b_j$ the MBRs for the nodes $n_i, n_j$ respectively. We say that the node pair $(n_i, n_j)$ is a qq-n-match for the edge $e(v_i, v_j)$ if $d_{min}(b_i, b_j) \le u_{ij}$ and $d_{max}(b_i, b_j) \ge l_{ij}$, where $d_{min}$ and $d_{max}$ represent the minimum and maximum distance between the MBRs, and additionally:
\begin{enumerate}[(a)]
\item Case $v_i\rightarrow v_j$: $\neg \exists n'_j \in ILQ_j$  such that $n'_j\neq n_j\land d_{max}(b_i, b'_j ) < l_{ij}$
\item Case $v_i\leftarrow v_j$: $\neg \exists n'_i \in ILQ_i$ such that $n'_i\neq n_i\land d_{max}(b'_i, b_j ) < l_{ij}$
\item Case $v_i\leftrightarrow v_j$: (a) and (b) holds
\item Case $e$ is qualitative with $\Re_{ij}\neq ``disjoint"$: $b_i\cap b_j\neq\emptyset $ 
\end{enumerate}
\end{definition}
\protect
Intuitively, a pair of nodes $n_i, n_j$ is a qq-n-match for the edge $e$ when by checking the minimum and maximum distance between their MBRs $b_i, b_j$, it is not possible to eliminate the possibility of existing POIs $p_i, p_j$ inside these nodes, such that $(p_i, p_j)$ is a qq-e-match for the edge $e$, so the children nodes or leaves of $n_i, n_j$ need to be further examined, and are candidates for finding qq-e-matches for the edge $e$ in context. Next, we introduce a lemma on which the QQESPM algorithm is based.
\protect
~ \newline
\begin{lemma}
Suppose the node pair $(n_i, n_j)$ is a qq-n-match of the edge $e(v_i, v_j)$. Let $n^f_i$ and $n^f_j$ be the father nodes of $n_i$ and $n_j$ respectively. Then, the node pair $(n^f_i, n^f_j)$ is also a qq-n-match of $(v_i, v_j)$.
\end{lemma}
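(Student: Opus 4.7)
The plan is to exploit the single geometric fact underlying the whole argument: in a quadtree the MBR of a child is contained in the MBR of its father. Writing $b_i \subseteq b_i^f$ and $b_j \subseteq b_j^f$, I get two monotonicity properties for free: $d_{min}(b_i^f, b_j^f) \le d_{min}(b_i, b_j)$ and $d_{max}(b_i^f, b_j^f) \ge d_{max}(b_i, b_j)$, and analogously $d_{max}(b_i, b') \le d_{max}(b_i^f, b')$ for any other MBR $b'$. From the hypothesis that $(n_i, n_j)$ is a qq-n-match I have $d_{min}(b_i, b_j) \le u_{ij}$ and $d_{max}(b_i, b_j) \ge l_{ij}$, so the two basic distance inequalities for $(n_i^f, n_j^f)$ follow immediately by monotonicity.

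Next I would dispatch the qualitative condition (d), since it is the shortest: if $\Re_{ij}\neq ``disjoint"$ then $b_i \cap b_j \neq \emptyset$ by hypothesis, and containment gives $b_i^f \cap b_j^f \supseteq b_i \cap b_j \neq \emptyset$.

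The heart of the proof is the exclusion clauses (a), (b), (c). I would handle (a) in detail (case (b) is symmetric by swapping $i$ and $j$, and (c) is just (a) conjoined with (b)). Arguing by contradiction, suppose there exists $n_j' \in ILQ_j$ with $n_j' \neq n_j^f$ and $d_{max}(b_i^f, b_j') < l_{ij}$. By monotonicity, $d_{max}(b_i, b_j') \le d_{max}(b_i^f, b_j') < l_{ij}$. Now I split on whether $n_j' = n_j$ or not. If $n_j' \neq n_j$, this directly contradicts the exclusion condition that $(n_i, n_j)$ satisfies as a qq-n-match. If $n_j' = n_j$ (which is consistent with $n_j' \neq n_j^f$ since $n_j$ is a proper child of $n_j^f$), then the inequality becomes $d_{max}(b_i, b_j) < l_{ij}$, contradicting the basic condition $d_{max}(b_i, b_j) \ge l_{ij}$ already established for the qq-n-match $(n_i, n_j)$.

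The main subtlety, and thus the one step that needs care, is precisely this $n_j' = n_j$ case in the exclusion clause: a naive appeal to the original exclusion condition does not close the argument, because that condition explicitly excludes $n_j$ itself. Handling it requires reusing the other part of the qq-n-match definition (the lower bound on $d_{max}(b_i, b_j)$), so the proof needs both halves of Definition~\ref{def:nmatch} working together rather than just the exclusion clause in isolation. Once that is observed, the rest is bookkeeping.
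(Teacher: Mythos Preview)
Your argument is correct and is in fact more complete than the paper's own proof. The paper handles only condition~(d) explicitly---using exactly your containment-implies-intersection observation---and delegates the distance bounds and the exclusion clauses (a)--(c) entirely to the cited ESPM paper. You instead give a self-contained argument for all four conditions, and in doing so you surface a genuine subtlety the paper never mentions: in the exclusion clause, the witness $n_j'$ violating condition~(a) for the father pair might coincide with $n_j$ itself, so one cannot simply invoke the child's exclusion hypothesis. Your resolution---falling back on the lower-bound inequality $d_{max}(b_i,b_j)\ge l_{ij}$ in that case---is the right move and shows why both halves of Definition~\ref{def:nmatch} are needed. The monotonicity facts you use ($d_{min}$ decreasing and $d_{max}$ increasing under enlargement of either argument) are all sound for MBRs under containment.
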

\protect
\begin{proof}
The proof for the quantitative restrictions of the edges is provided in \cite{espm2019}. Regarding the additional proposed criterion to qualify as a qq-n-match, which is related to the connectivity constraint of the edge, it's important to note that if the node pair $(n_i, n_j)$ constitutes a qq-n-match for an edge with a qualitative constraint other than \textit{disjoint}, they will possess intersecting bounding boxes. Since their father nodes encompass them, they too will be intersecting, thus ensuring that the condition for a node match persists for the father nodes.
\protect
\end{proof}
\protect
\section{QQESPM algorithm}
\protect
\begin{algorithm}[b!]
\small
\DontPrintSemicolon 
\KwIn{IL-Quadtree $ILQ$, spatial pattern $G$}
\KwOut{$\psi$: all the matches of $P$}
$L = max(depth(ILQ_{i}), 1 \le i \le n) $\;
\For{$level = 1$ to $L$}{
    derive the order of computing qq-n-matches for this level\;
        
    \For{each edge $e$}{
            compute the qq-n-matches for $e$ in the current level\;
    }
}
derive the order of computing qq-e-matches\;
identify skip-edges\;
\For{each non-skip\ edge $e$}{
            compute the qq-e-matches for e\;
}
derive the order of joining qq-e-matches\;
$\psi \gets join\_qq\_e\_matches()$\;
\Return{$\psi$}\;
\caption{{\sc QQESPM}}
\label{algo:qqespm}
\end{algorithm}
\protect
This section presents the QQESPM algorithm, designed to handle QQ-SPM queries. QQESPM considers six possible topological relations between POIs, namely ``\textit{equals}", ``\textit{touches}", ``\textit{covers}", ``\textit{covered\ by}", ``\textit{partially\ overlaps}", and ``\textit{disjoint}". The overview of the search procedure is shown in Algorithm \ref{algo:qqespm} (QQESPM), which delineates the high-level sequential steps for query execution, with an emphasis on achieving efficient execution by using the qq-n-matches concept. It iteratively operates at the depth levels in the  quadtrees of keywords in the search pattern, examining qq-n-matches for each edge at the current depth level by evaluating child node pairs from the previous depth level (according to Lemma 1). Upon reaching the final depth level of the quadtrees, it tests the pairs of objects within the last level's qq-n-matches to identify qq-e-matches. The qq-e-matches of each edge are then joined to produce solutions (matches) for the spatial pattern. 
\protect
Within the context of a given edge $e(v_i, v_j)$, the QQESPM algorithm systematically searches for its qq-n-matches in a tiered manner, starting from the root nodes of the $ILQ_i$ and $ILQ_j$ quadtrees. The initial qq-n-match for edge $e$ arises from the pair of root nodes, specifically (root of $ILQ_i$, root of $ILQ_j$), exclusively at level 0. The process then progresses to examine pairs of nodes $(n_i, n_j)$, where $n_i$ is a child of the root of $ILQ_i$ and $n_j$ is a child of the root of $ILQ_j$, following Lemma 1. This step identifies the qq-n-matches for $e$ at the 1st level. The algorithm continues this exploration iteratively by inspecting the children of these nodes, deducing qq-n-matches for $e$ at the 2nd level and subsequent levels. This iterative traversal persists until the maximum depth of the quadtrees is reached. The final level's qq-n-matches are retained to subsequently derive the qq-e-matches for the edges.
\protect
At each level, the algorithm employs a reordering strategy for the edges list, giving priority to edges with fewer qq-n-matches from the previous level, guided by the rationale that such edges are more likely to yield fewer qq-e-matches. This strategic ordering accelerates computation by swiftly eliminating unsuitable nodes early, as proposed by \cite{espm2019}.
\protect
After calculating qq-n-matches for all edges at the final level (the maximum depth of the quadtrees), the algorithm evaluates POI pairs within each edge's qq-n-matches to determine the qq-e-matches. Before computing qq-e-matches for any edge, the algorithm checks if the terminal vertices of the edge have a restricted set of candidate objects. This set is obtained from qq-e-matches of edges with shared vertices. This strategy serves as a pre-joining mechanism avoiding redundant pair assessments. Also, the calculation of qq-e-matches is not necessary for some mutually inclusive edges whose extreme vertices are shared with other edges whose qq-e-matches will be computed, so that for these edges, called skip edges, the verification of constraints occurs at the joining stage, which compares the qq-e-matches of edges with shared vertices, and eliminates the non-matching ones.
\protect
The structural framework and strategic heuristics of ESPM are replicated within the QQESPM algorithm. The divergence lies in the criteria for qq-n-match and qq-e-match identification, as defined in Definitions 2 and 4, which will occur in lines 5 and 9. These divergent criteria is sufficient to promote distinct computations at each level, as qq-n-matches are computed level by level from the root to the maximum depth level of the keyword's quadtrees.
\protect
\section{Experiments and Results}
\protect
In this Section, we evaluate the performance of the proposed algorithm QQESPM in terms of execution time by comparing with a trivial algorithm for solving the QQ-SPM query that we call QQ-simple. 
\protect
\subsection{Experiments Description}
\protect
The ESPM algorithm was implemented in Python, following the description provided in \cite{espm2019}. This implementation was further adapted to include the qq-n-matches and qq-e-matches verification stages to accommodate qualitative connectivity constraints, resulting in the initial implementation of the QQESPM algorithm\footnote{The code for this implementation can be found in https://zenodo.org/records/10085300}. Additionally, a more straightforward approach, referred to as QQ-simple, was also implemented to be compared with QQESPM. This approach checks the connectivity constraints only at the final step of ESPM, employing a filtering mechanism to exclude solutions that do not meet the connectivity requirements. Subsequently, we conducted a comparative performance analysis of these two QQ-SPM solutions.
\protect
Experiments  were executed on a machine equipped with Intel Core i7-12700F CPU 4.90 GHz, coupled with 32GB of memory, operating on the Ubuntu OS. The computational load was carried out on a single CPU core.
\protect
We used a dataset comprising 33,877 POIs, extracted from OpenStreetMap\footnote{https://www.openstreetmap.org/} filtered by the following bounding box: \{min\_lat: -8.3610, min\_long:-38.8559, max\_lat: -5.9275, max\_long: -34.7415\}, thereby predominantly spanning the Paraíba state, Brazil. The dataset comprises the tags `\textit{amenity}', `\textit{shop}', and `\textit{tourism}', containing 315 distinct keywords.
\protect
In an effort to construct resource-intensive search spatial patterns, mirroring real-world conditions, the 20 most frequent keywords were identified and selected from the tags `\textit{amenity}', `\textit{shop}', and `\textit{tourism}', amounting to a cumulative set of 60 keywords to compose the search patterns.  
\protect
The experiments encompassed 12 distinct graph architectures for spatial patterns, following \cite{espm2019}. These structural patterns can be visualized in Figure \ref{fig:structures}. For each of these architectures, 5 distinct spatial patterns were generated with randomly selected keywords, totalizing 60 spatial patterns. 
\protect
The dataset was randomly shuffled and divided into segments representing $20\%$, $40\%$, $60\%$, $80\%$, and $100\%$ of the total POI count. For each of these dataset subsets, searches were conducted five times for each of the 60 spatial patterns generated, for both algorithms QQESPM and QQ-simple. Thus, the total number of executions was 3,000, and each of the two algorithms answered 1,500 queries.
\protect
For the purpose of this study, a simplified convention was adopted, by using the Euclidean distance to measure the distance between POIs coordinates (longitude and latitude). Note that it differs from the distance in kilometers. To construct the search patterns, the parameter $l_{ij}$, representing the minimum inter-POI distance, was randomly drawn from the interval $[0, 0.005]$ (equivalent to approximately $0$ to $550m$), while the parameter $u_{ij}$, representing the maximum inter-POI distance, was randomized within the range $[l_{ij}, l_{ij} + 0.02]$ (reaching up to $2.7km$ approximately). The connectivity relations were introduced in the edges randomly from the set \{``\textit{equals}" , ``\textit{touches}", ``\textit{covers}", ``\textit{covered\ by}", ``\textit{partially\ overlaps}", ``\textit{disjoint}"\}. Each edge had a probability of $50\%$ of receiving a connectivity relation constraint.
\protect
\begin{figure}[t!]
\centering
\includegraphics[width=.5\textwidth]{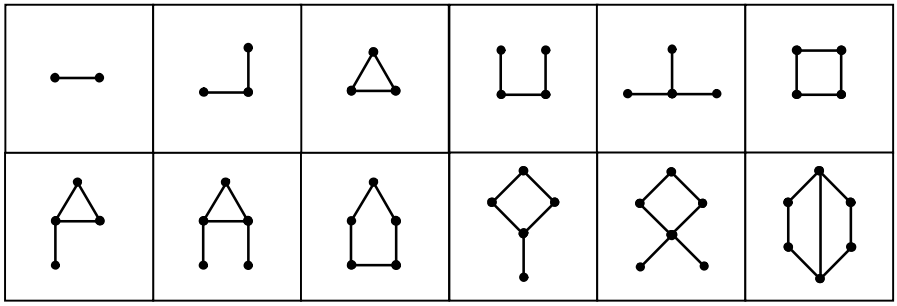}
\caption{Structure of Search Spatial Patterns \cite{espm2019}}
\label{fig:structures}
\end{figure}
\protect
\begin{figure}[t!]
\centering
\includegraphics[width=.9\textwidth]{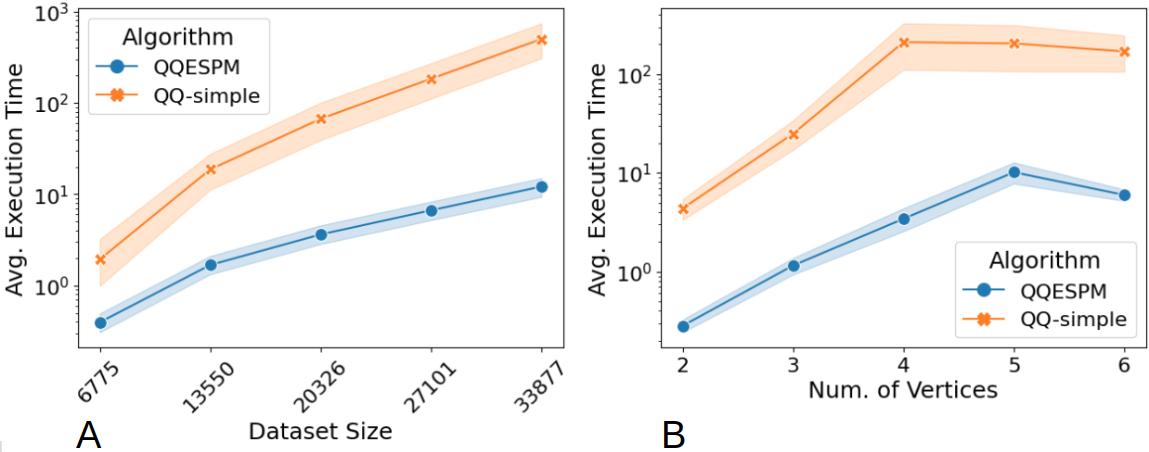}
\caption{Avg. Execution Time by Dataset Size (A) and by Number of Vertices (B) for Algorithms QQESPM and QQ-simple}
\label{fig:results}
\end{figure}
\protect
\subsection{Results}
\protect
We now present the performance results of the executions in terms of scalability of dataset size and variation in the number of vertices.
\protect
\vspace{-0.3cm}
\protect
\subsubsection*{Scalability Assessment}
 The average execution time by dataset size was measured for each algorithm. The results shown in Figure \ref{fig:results} (A) reveal that the average execution time difference between QQESM and QQ-simple becomes larger as the dataset size increases. Clearly, QQESPM demonstrates significantly better scalability compared to QQ-simple. To better visualize the comparison, we applied a logarithmic scale to the y-axis. Shaded areas represent a 95\% confidence interval for the average execution time.
 \protect
\vspace{-0.3cm}
\protect
\subsubsection*{Number of Vertices Assessment}
The average execution time by number of vertices in the search pattern was measured for each algorithm. The results, illustrated in Figure \ref{fig:results} (B), consistently demonstrate QQESPM's superior runtime performance over the basic QQ-simple solution, regardless of the number of vertices. Notably, the observation that patterns with 5 or 6 vertices do not require longer execution times than patterns with 4 vertices can be explained by the increased complexity of search patterns, since in the same search area, patterns with too much keywords are less likely to have matches, and in these cases an early stopping of the query procedure can occur by identifying the non-existence of qq-n-matches in an early level.
\protect
The average memory allocation by QQESPM queries was also consistently lower for all dataset sizes and number of vertices evaluated, compared to the QQ-simple executions. The overall average memory allocation during queries was $284.16$ MB for QQESPM executions and $314.34$ MB for QQ-simple executions, highlighting the memory efficiency advantage of QQESPM over the QQ-simple trivial approach.
\protect
\vspace{-0.3cm}
\protect
\subsubsection*{Statistical Test}
The executions that used the whole dataset were grouped by algorithm. Then, a Z hypothesis test was conducted to compare the average query execution time between QQESPM and QQ-simple. The calculated p-value of $7.929 \cdot 10^{-6}$ confirms a statistically significant difference in average execution times between the two algorithms when the dataset size is sufficiently large. 
\protect
\section{Conclusion}
\protect
The main objective of this study was to introduce and formally define a new category of POI group search called QQ-SPM, which generalizes the existing SPM query by including connectivity constraints among POIs. To address the proposed QQ-SPM search problem, we introduced the QQESPM algorithm, derived from ESPM. We conducted an empirical evaluation comparing the runtime performance of the QQESPM algorithm with a simplified QQ-SPM solution that only verifies connectivity constraints in the final stage of an ESPM execution. Additionally, we performed a statistical hypothesis test to assess the average runtime of QQESPM and the trivial solution. The experimental results, supported by statistical analyses, confirm the effectiveness of the QQESPM algorithm, highlighting its efficiency and superior performance in executing QQ-SPM queries. For future work, we plan to enrich the set of available spatial predicates for defining spatial search patterns. We will also conduct more extensive performance evaluation, using code parallelization approaches.
\protect
\bibliographystyle{sbc}
\bibliography{sbc-template}
\end{document}